\newcommand{\modd}{\,\mathrm{mod}\,}
\newtheorem{theorem}{Theorem}
\newtheorem{assumption}[theorem]{Assumption}
\newtheorem{definition}[theorem]{Definition}
\newtheorem{corollary}[theorem]{Corollary}
\begin{document}

\author{Farhad Farokhi\thanks{CSIRO's Data61 and the University of Melbourne, E-mails: farhad.farokhi@data61.csiro.au; farhad.farokhi@unimelb.edu.au}}
%

\title{\huge Differential Privacy for Evolving Almost-Periodic Datasets with Continual Linear Queries: \\ Application to Energy Data Privacy} 

\date{}

\maketitle

\begin{abstract}
{For evolving datasets with continual reports, the composition rule for differential privacy (DP) dictates that the scale of DP noise must grow linearly with the number of the queries, or that the privacy budget must be split equally between all the queries, so that the privacy budget across all the queries remains bounded and consistent with the privacy guarantees. To avoid this drawback of DP, in this paper, we consider datasets containing almost periodic time series, composed of periodic components and noisy variations on top that are independent across periods. Our interest in these datasets is motivated by that, for reporting on private periodic time series, we do not need to divide the privacy budget across the entire, possibly infinite, horizon. Instead, for periodic time series, we generate DP reports for the first period and report the same DP reports periodically. In practice, however, exactly periodic time series do not exist as the data always contains small variations due to random or uncertain events. For instance, the energy consumption of a household may repeat the same daily pattern  with slight variations due to minor changes to the habits of the individuals. The underlying periodic pattern is a function of the private information of the households, e.g., size of the household, appliances and energy ratings, and average commuting habits. It might be desired to protect the privacy of households by not leaking information about the recurring patterns while the individual daily variations are almost noise-like with little to no privacy concerns (depending on the situation). Motivated by this, we define DP for almost periodic datasets and develop a Laplace mechanism for responding to linear queries. We provide statistical tools for testing the validity of almost periodicity assumption. We use multiple energy datasets containing smart-meter measurements of households to validate almost periodicity assumption. We generate DP aggregate reports and investigate their utility. We show that, by considering the almost periodic nature of energy data, the utility of reporting average consumption per 30 minutes windows over one year can be improved approximately $200$ folds in terms of the standard deviation of the DP noise.} 
\end{abstract}


\section{Introduction}
In many big data applications, such as energy, transportation, and retail, real-time analytic can significantly improve decision making in both short-term (e.g., inventory management) and long-term (e.g., investment opportunities). However, privacy concerns can inhibit access to high-resolution, good-quality data or cause backlash from customers or regulators. Examples of regulatory frameworks to limit access to private data are the General Data Protection Regulation in EU, the California Consumer Privacy Act, and the development of the Data Sharing and Release Bill in Australia.

A natural candidate for treating privacy concerns of individuals is differential privacy (DP)~\cite{dwork2014algorithmic,dwork2006calibrating}, requiring that the statistics of privacy-preserving query responses do not change significantly by changing or removing a single entry of the dataset. The DP framework, however,  often deals with providing privacy-preserving responses to queries based on static datasets and has the highest utility when responding to few aggregate queries. This is because the composition rule for DP dictates that, in order to ensure that the privacy budget remains bounded across all the queries or that it is relatively small to ensure reasonable privacy guarantees, the magnitude of the DP noise must grow linearly with the number of the queries; see, e.g.,~\cite{dwork2014algorithmic} for more information on composition rules for DP. 

More recently, evolving datasets have attracted attention in DP literature with improved privacy and utility guarantees~\cite{dwork2010differential, chan2011private,perrier2019private}. These studies however only consider certain sets of queries (such as, counting queries~\cite{dwork2010differential}) and require that the magnitude of the additive DP noise to grow unboundedly, albeit in a much slower rate. For instance, the best bound so far scales as $\mathcal{O}(\log(t)^{1.5})$, where $t$ denotes the number of observations or queries. Some studies have even proved that magnitude of the additive DP noise might need to grow boundedly for some continual queries~\cite{vadhan2017complexity}.

In this paper, we consider a particular family of datasets containing almost periodic time series, to be able to ensure DP with a finite budget over long, possibly rolling infinite horizons. Almost periodic time series contain a core periodic component and noisy variations on top that are independent across periods; see Figure~\ref{fig:1} for an illustrative example of almost periodic time series. The interest in these datasets is motivated by that, for proving privacy-preserving  reports on periodic time series, we do not need to divide the privacy budget across a very long time, possibly infinite horizon. In fact, for periodic time series, we can generate privacy-preserving reports for the first period and report the same DP reports for each period. In practice, however, there are no periodic time series as variations due to random or uncertain events can slightly change the data. For instance, the energy consumption of a household can repeat the same pattern on a daily basis with slight variations caused by possibly minor changes in the behaviour and the consumption of the individuals. The underlying periodic pattern contains private information of households, e.g., average number of the occupants (i.e., size of the household), appliances and their energy ratings, and commuting habits (i.e., average departure and arrival times). The variations can be caused by daily temperature changes, uncertainties in departure/arrival times of occupants, and seasonal appliance changes. It might be desired to protect the privacy of households by not leaking information regarding the underlying periodic pattern while the individual deviations are almost noise-like with little privacy concerns. However, we can also reduce the amount of the leaked information about the variations on top if they contain private information about the individuals.

In what follows, we define DP for almost periodic datasets. We further develop  Laplace mechanisms for responding to linear queries on almost periodic datasets. We provide statistical tools for testing the validity of the almost periodicity assumption for datasets based on Pearson correlation coefficients. We use Ausgrid dataset~\cite{ausgrid}, containing electricity data for 300 homes with rooftop solar systems, and UMass Dataset~\cite{umass}, containing smart meter measurements of 114 single-family households, to validate almost-periodic assumption of smart-meter measurements. We also use these datasets to generate DP aggregate reports and investigate the utility of such DP policies. 

In summary, in this paper, we make the following contributions:
\begin{itemize}
\item We introduce the concept of almost periodic datasets for privacy-preserving reporting of time series;
\item We define differential privacy for almost periodic datasets and develop Laplace mechanisms for responding to continual linear queries for almost periodic datasets;
\item We provide statistical tools based on correlation coefficients  for testing the validity of the almost-periodic assumption for datasets;
\item We demonstrate the applicability of our results on smart-meter time series of  households in Ausgrid~\cite{ausgrid} and UMass~\cite{umass} datasets. 
\end{itemize}

\subsection{Related Studies}
Homomorphic encryption and secure multi-party computation provide methods for computing energy aggregates without access to individual smart meters~\cite{kursawe2011privacy, rial2011privacy,rastogi2010differentially,shi2011privacy,li2010secure, garcia2010privacy}. These aggregates can be potentially privacy preserving if the number of the households in the aggregate is large enough~\cite{buescher2017two}. However, the use of secure multi-party computation and homomorphic encryption can potentially reduce the computational efficiency of real-time data analytic solutions by adding non-trivial computational and communication performance overhead. Disaggregation and memebership inference attacks can also used to infringe on the privacy of individuals contributing to the aggregate statistics~\cite{buescher2017two,pyrgelis2017knock,pyrgelis2017does}.

Many studies have considered DP in longitudinal datasets containing time series with application to location privacy~\cite{pyrgelis2017knock, pyrgelis2017does, machanavajjhala2008privacy, to2016differentially} and energy data privacy~\cite{zhao2014achieving, acs2011have,danezis2011differentially, rastogi2010differentially}. These studies however focus on datasets containing finite, static time series and often split the privacy budget between the queries. They do not address the underlying problem regarding the need for increasing the privacy budget as more queries are answered. In the past, the lack of a systematic way for setting the privacy budget in longitudinal and using ad hoc methods such as resetting the privacy budget with time has shown to render DP ineffective in practice~\cite{appleprivacyproblems,tang2017privacy}, potentially leaving the individuals vulnerable to privacy breaches and attacks.

Other studies have used renewable generations or batteries to mask the private behaviour in the households~\cite{mclaughlin2011protecting,varodayan2011smart, farokhi2017fisher,giaconi2017smart}. These studies however do not provide the privacy guarantees in the form of DP; their guarantees are often information-theoretic in nature using mutual information, entropy, or Fisher information as measures of private information leakage. These studies also do not address the issue of splitting privacy budget across a long horizon as the complexity of the underlying problem prohibits them from solving the problem for an infinite horizon.

\begin{figure*}[t!]
\begin{tikzpicture}
\node[] at (0,1.8) {Almost periodic time series with period $T=10$};
\node[] at (0,0) {\includegraphics[width=8cm]{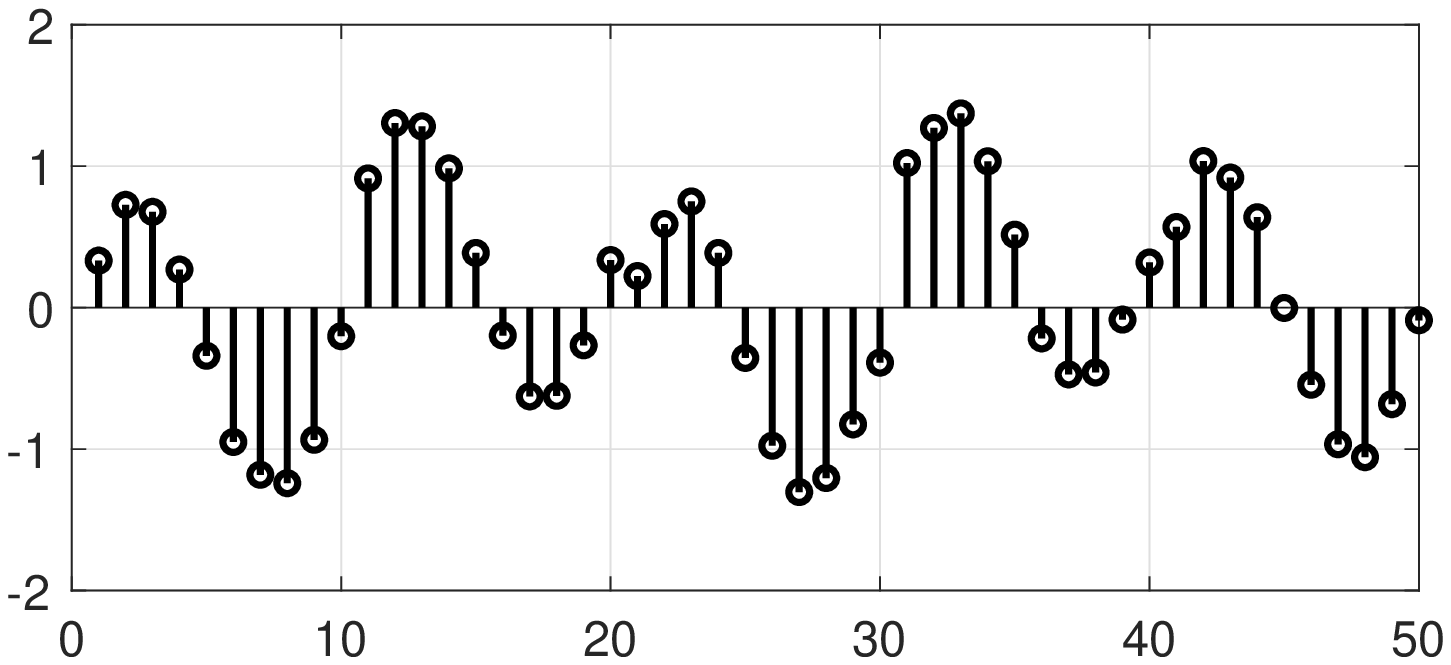}};
\node[] at (0,-1.8) {Time $t$};
\node[rotate=90] at (-3.5,0) {$x(t)$};
\node[] at (9,2) {\includegraphics[width=8cm]{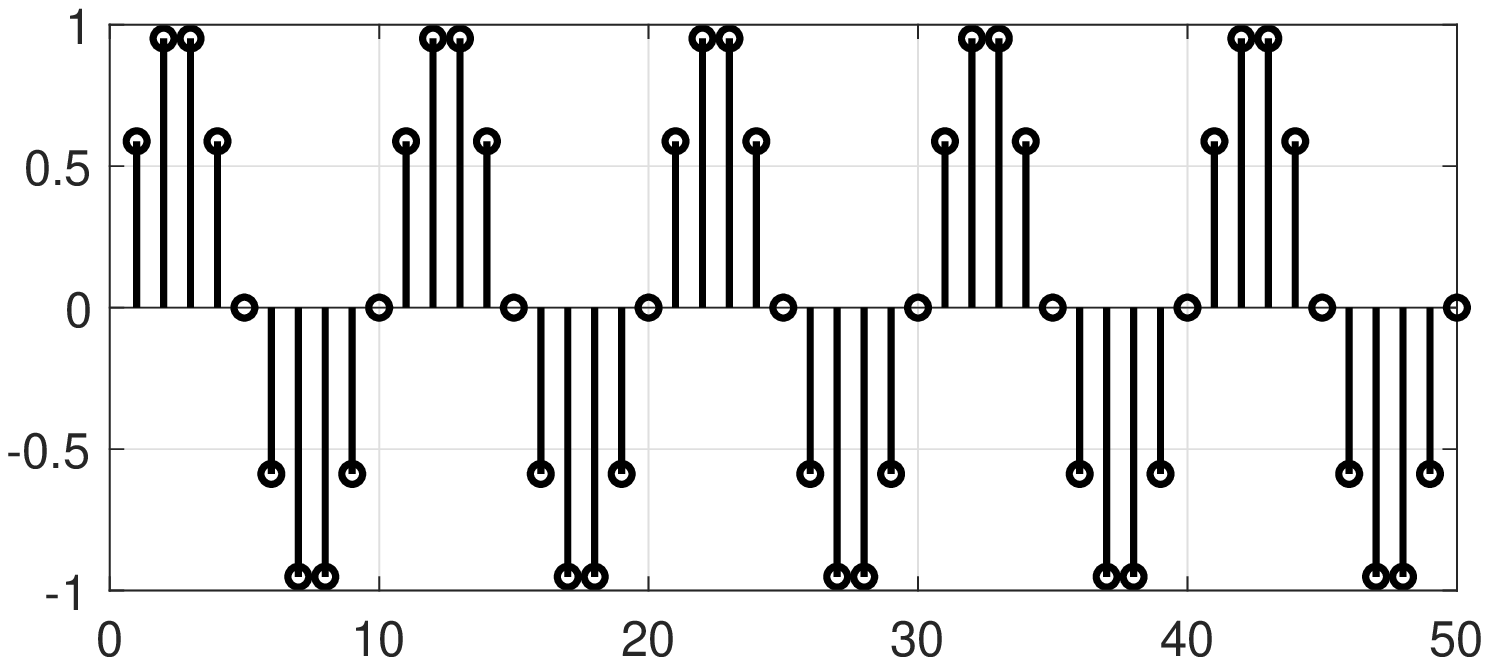}};
\node[] at (9,0.2) {Time $t$};
\node[rotate=90] at (+5.5,2) {$z(t)$};
\node[] at (9,-2) {\includegraphics[width=8cm]{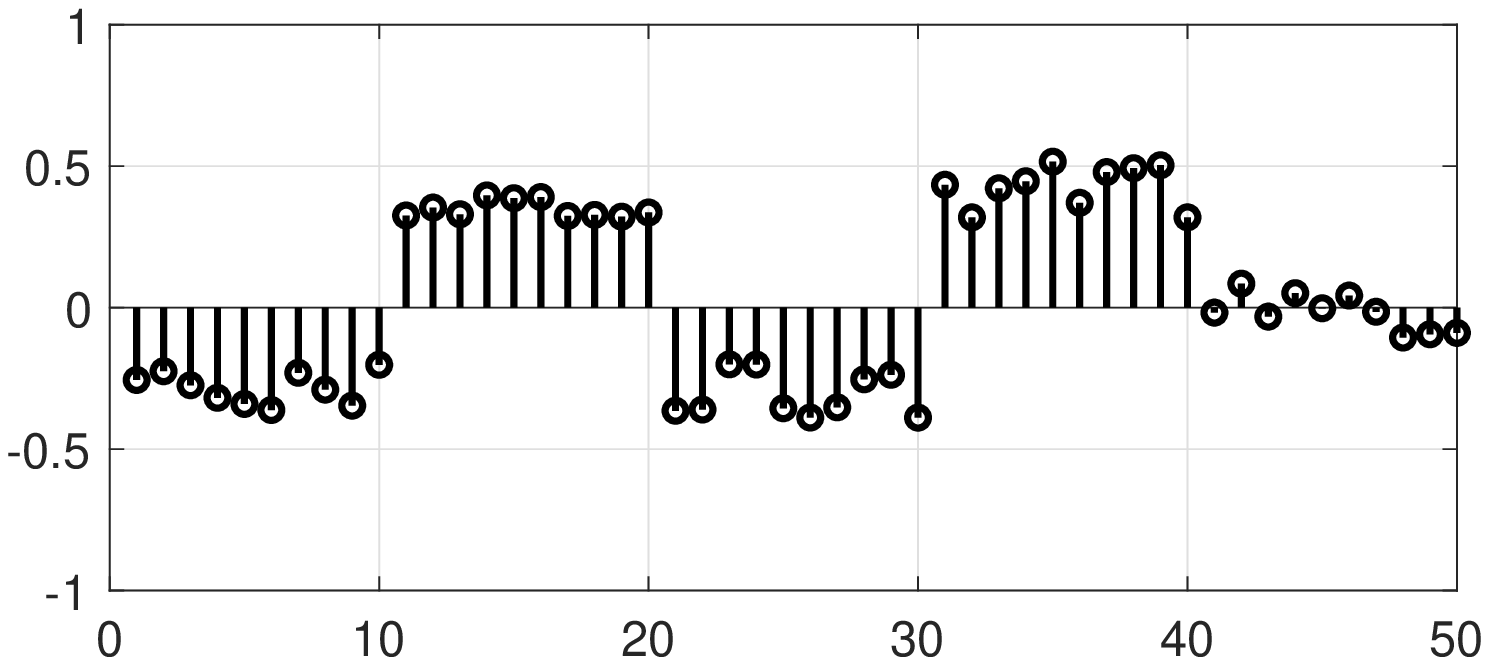}};
\node[] at (9,-3.8) {Time $t$};
\node[rotate=90] at (+5.5,-2) {$w(t)$};
\draw[->,line width=1pt] (+5.1,2) -- (+4.5,2) -- (4.5,0.25);
\draw[->,line width=1pt] (+5.1,-2) -- (+4.5,-2) -- (4.5,-0.25);
\node[circle,draw,minimum width=0.5cm,line width=1pt] at (4.5,0) {};
\draw[->,line width=1pt] (4.25,0) -- (3.5,0);
\node[] at (4.5,0) {\bf +};
\end{tikzpicture}
\caption{ \label{fig:1}
Example of an almost periodic time series with period $T=10$.
}
\end{figure*}

DP for evolving datasets have been more recently studied in several papers~\cite{dwork2010differential, chan2011private,perrier2019private}. As stated earlier, these studies consider specific counting and histogram queries and require an unboundedly growing scale for the DP noise. These studies and their observations are in line with negative results in~\cite{vadhan2017complexity}  stating that the magnitude of the additive DP noise might need to grow bounded for general queries to maintain certain privacy budgets under continual observation.

\subsection{Paper Organization}

The rest of the paper is organized as follows. In Section~\ref{sec:datasets}, we introduce the definition of almost periodic datasets and present differentially-private mechanisms for responding to linear queries on these datasets.  We present methods for testing the validity of the almost periodic assumption in Section~\ref{sec:testing}. We discuss the experimental results in Section~\ref{sec:experiment} and conclude the paper in Section~\ref{sec:conclusions}.

\section{Almost-Periodic Datasets} \label{sec:datasets}
In this paper, we consider evolving dataset, comprised of several time series, in the form of
\begin{align} \label{eqn:dataset}
X(t):=
\begin{bmatrix}
x_1(1) & x_1(2) & \cdots & x_1(t)\\
x_2(1) & x_2(2) & \cdots & x_2(t)\\
\vdots & \vdots & \ddots & \vdots\\
x_n(1) & x_n(2) & \cdots & x_n(t)
\end{bmatrix} \in\mathcal{X}^{n\times t}\subseteq\mathbb{R}^{n\times t},
\end{align}
where $x_i(t)\in\mathcal{X}\subseteq\mathbb{R}$ is the data record associated with individual $i\in\{1,\dots,n\}$ at time instant $t\in\mathbb{N}$.  Note that $t$ can be unbounded and potentially approach infinity. By construction, for any $1\leq k\leq t$, $X(k)$ denotes a matrix extracted by eliminating  the last $t-k$ columns of $X(t)$. An example of such a longitudinal dataset is a dataset containing regular smart-meter reading of $n$ households as in~\cite{ausgrid,umass}.  

\begin{definition}[Almost Periodic] \label{def:almost_periodic} $(x(t))_{t\in\mathbb{N}}$ is almost periodic with period $T\in\mathbb{N}$ if there exists $(z(t))_{t\in\mathbb{N}}$ and $(w(t))_{t\in\mathbb{N}}$ such that 
\begin{enumerate}
\item $\mathbf{z}:=(z(t))_{t\in\mathbb{N}}$, with $z(t)\in\mathcal{Z}\subseteq\mathbb{R}$, is periodic with period $T$, i.e., $z(t+T)=z(t)$ for all $t\in\mathbb{N}$;
\item $(w(t))_{t\in\mathbb{N}}$ with $w(t)=x(t)-z(t)\in\mathcal{W}\subseteq\mathbb{R}$, $\forall t\in\mathbb{N}$, is a time series such that $\mathbf{w}_k:=(w(t))_{(k-1)T+1}^{kT}$ and $\mathbf{w}_\ell=(w(t))_{(\ell-1)T+1}^{\ell T}$ are statistically independent for all $k,\ell\in\mathbb{N}$ if $k\neq \ell$.
\end{enumerate}
There exists an injective relationship, i.e., a one-to-one correspondence, between the almost-periodic time series $(x(t))_{t\in\mathbb{N}}$ and its representation $(\mathbf{z},(\mathbf{w}_k)_{k\in\mathbb{N}})$. We illustrate the injective relationship by $
(x(t))_{t\in\mathbb{N}} \sim (\mathbf{z},(\mathbf{w}_k)_{k\in\mathbb{N}}). $
\end{definition}

Definition~\ref{def:almost_periodic} states that a time series is almost period if it has a periodic component with additive noise-like variations on top of that. The variations are independently  distributed across the periods, e.g., the variations of individuals' habits in a household differ on daily basis and are independent of each other. Note that we do not make any statistical assumptions for the periodic elements of the time series. Furthermore, although we assume that the variations on top of the periodic component are independent across the periods, we do not assume any particular distribution for them.

Figure~\ref{fig:1}, on top of the next page, illustrates an example of an almost periodic time series with period $T=10$. The periodic component in  Figure~\ref{fig:1} is a sampled sinusoidal time series with period $T=10$. The additive noise, that renders the overall signal almost-periodic instead of periodic, is composed of a random constant signal over each period with smaller additive uniform noise at each time instant. 

In this paper, we assume that the rows of the dataset~\eqref{eqn:dataset} are almost periodic with the same period. For instance, in the case of energy data, the same pattern of consumption repeats on a daily basis (hence the assumption of equal frequency for all the households).

\begin{assumption} There exists $T\in\mathbb{N}$ such that $(x_i(t))_{t\in\mathbb{N}}$ for each $i\in\{1,\dots,n\}$ in the evolving dataset~\eqref{eqn:dataset} is almost periodic with period $T\in\mathbb{N}$.
\end{assumption}

To be able to introduce differential privacy for almost-periodic datasets, we need to introduce the notion of neighbouring dataset. 

\begin{definition}[Neighbouring Datasets] \label{def:neighbourhood} Two datasets $X(t),X'(t)\in\mathcal{X}^{n\times t}$ are neighbouring if they only differ for at most one time series, denoted by $i$, and $(x_i(t))_{t\in\mathbb{N}} \sim (\mathbf{z}_i,(\mathbf{w}_{i,k})_{k\in\mathbb{N}})$ and $(x'_i(t))_{t\in\mathbb{N}} \sim (\mathbf{z}'_i,(\mathbf{w}_{i,k})_{k\in\mathbb{N}})$.
\end{definition}

Definition~\ref{def:neighbourhood} implies that two datasets are neighbouring if they only differ for one time series and, for that time series, the periodic components of the time series can only differ, i.e., the variations on top remain similar. The aim of this notion of neighbourhood for datasets, when utilized within the DP definition, is to keep the periodic component of the time series private. The rest is assumed to not contain discernible private  information about the individuals. 

As the dataset evolves, the data custodian reports the following responses:
\begin{align}
Y(t)
=
\begin{bmatrix}
y(1) & y(2) & \cdots & y(t)
\end{bmatrix}
\in\mathcal{Y}^{t}\subseteq\mathbb{R}^t.
\end{align}
Similarly, for any $1\leq k\leq t$, $Y(k)$ denotes a row vector extracted by eliminating the last $t-k$ entries of $Y(t)$. At time instance $t$, to generate the entry $y(t)$, the custodian uses conditional probability density function $p_{y|X}(\cdot|\cdot)$. From now on, we refer to this as the mechanism of the curator.

\begin{definition}[Differential Privacy (DP)] A reporting mechanism by the curator is $\epsilon$-differentially private (DP) over the horizon $\{1,\dots,\tau\}\subseteq\mathbb{N}$ if 
\begin{align} \label{eqn:diffpriv}
\mathbb{P}\{Y(t)\in\mathcal{Y}'|X(t)\}
\leq \exp(\epsilon)
\mathbb{P}\{Y(t)&\in\mathcal{Y}'|X'(t)\},
\forall t\in\{1,\dots,\tau\},
\end{align}
for any Lebesgue measurable set $\mathcal{Y}'\subseteq\mathcal{Y}^t$ and any two neighbouring datasets $X(t),X'(t)$ in the sense of Definition~\ref{def:neighbourhood}.
\end{definition}

In this paper, when responding to queries on the dataset, we use additive privacy-preserving noise. Assume that the curator needs to respond to query $f:\mathcal{X}^n\rightarrow\mathbb{R}$ at every time instant $t\in\mathbb{N}$. 
In the next theorem, we show that, for linear queries (and almost periodic datasets), we can ensure differential privacy for all $t\in\mathbb{N}$ with a finite budget. In what follows, we use the notation $x_{-i}$ to denote $(x_j)_{j\in\{1,\dots,n\}\setminus\{i\}}$. Evidently, by definition, $(x_{-i},x_i)=(x_\ell)_{\ell\in\{1,\dots,n\}}$.

\begin{theorem}  \label{tho:dp}
Let the response to linear query $f$ be given by
\begin{align} \label{eqn:additivenoise}
y(t)=
\begin{cases}
f(x_1(t),\dots,x_n(t))+v(t), & 1\leq t\leq T,\\
y(t\modd T) -f(w_1(t\modd T),\dots,w_n(t\modd T))+f(w_1(t),\dots,w_n(t)), & t>T.
\end{cases}
\end{align}
This response is $\epsilon$-differentially private over the horizon $\{1,\dots,\tau\}\subseteq\mathbb{N}$ for any $\tau\in\mathbb{N}$ if $v(t)$ is a zero-mean Laplace noise with scale $T\Delta f_z/\epsilon$, where
\begin{align} \label{eqn:Deltaf_z}
\Delta f_z:= \sup_{x_{-i}\in\mathcal{Z}^{n-1}}\sup_{x_i,x'_i\in\mathcal{Z}} |f(x_{-i},x_i)-f(x_{-i},x'_i)|.
\end{align}
\end{theorem}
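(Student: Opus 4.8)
The plan is to reduce the entire, possibly infinite, reporting trajectory to the first period and then to analyse the first period as an ordinary $T$-fold Laplace mechanism. First I would unfold the recursion in~\eqref{eqn:additivenoise}. Writing $r=t\modd T$ and using that $f$ is linear together with the periodicity of the $z$-component (so that $f(z_1(t),\dots,z_n(t))=f(z_1(r),\dots,z_n(r))$), a direct substitution of $y(r)=f(x_1(r),\dots,x_n(r))+v(r)$ shows that the $f(w(\cdot))$ terms cancel and
\begin{align}
y(t)=f(x_1(t),\dots,x_n(t))+v(r),\qquad r=t\modd T,
\end{align}
for every $t\in\mathbb{N}$. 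Thus the mechanism simply re-uses the first-period noise $v(1),\dots,v(T)$ periodically, and the whole vector $Y(t)$ is a deterministic function of the first-period outputs $Y(T)=(y(1),\dots,y(T))$: the relation $y(t)=y(r)-f(w(r))+f(w(t))$ recovers each later entry from $y(r)$ by adding an offset that involves the data only through its variation component $\mathbf{w}$.

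The second step records that this post-processing map is identical on neighbouring datasets. By Definition~\ref{def:neighbourhood}, $X$ and $X'$ share the same $\mathbf{w}$-components (they may differ only in the periodic part $\mathbf{z}_i$ of one series), so the offsets $-f(w(r))+f(w(t))$ coincide under $X$ and $X'$. Hence $Y(t)=F(Y(T))$ with one and the same measurable $F$ for both datasets, and the post-processing invariance of differential privacy (valid for arbitrary measurable maps, which also sidesteps the fact that for $t>T$ the law of $Y(t)$ is supported on a $T$-dimensional affine subspace of $\mathbb{R}^t$ and admits no density on $\mathbb{R}^t$) reduces the claim, for every $t\leq\tau$, to showing that $Y(T)$ is $\epsilon$-differentially private.

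For the first period the components $y(1),\dots,y(T)$ are independent because $v(1),\dots,v(T)$ are independent Laplace variables, so the likelihood ratio factorises and I would bound it directly. Using $y(r)=f(x(r))+v(r)$ and that $X,X'$ differ only in one coordinate of one series whose $\mathbf{w}$ agrees, linearity gives $f(x(r))-f(x'(r))=f(z(r))-f(z'(r))$ with the two $z$-vectors differing in a single coordinate lying in $\mathcal{Z}$; therefore $|f(x(r))-f(x'(r))|\leq\Delta f_z$ for each $r$ by~\eqref{eqn:Deltaf_z}. With scale $b=T\Delta f_z/\epsilon$, the per-coordinate triangle inequality bounds each Laplace factor by $\exp(|f(x(r))-f(x'(r))|/b)\leq\exp(\Delta f_z/b)$, and the product over the $T$ coordinates yields $\exp(T\Delta f_z/b)=\exp(\epsilon)$, as required.

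I expect the main obstacle to be the bookkeeping around the degenerate support rather than any single inequality: for $t>T$ the output occupies a lower-dimensional subspace, so one cannot argue through joint densities on $\mathbb{R}^t$, and the cleanest remedy is to phrase everything through the common deterministic reduction to $Y(T)$ and invoke post-processing. A secondary point to verify carefully is the exact cancellation in the first step (including any constant term if $f$ is affine rather than strictly homogeneous) and the convention for $t\modd T$, so that the residue always indexes a genuine first-period slot in $\{1,\dots,T\}$.
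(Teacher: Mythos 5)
Your proposal is correct and follows essentially the same route as the paper's proof: bound the first-period likelihood ratio as a product of $T$ independent Laplace factors, each controlled by the per-query sensitivity $\Delta f_z$ so that scale $b=T\Delta f_z/\epsilon$ yields $\exp(\epsilon)$, and observe that all outputs beyond the first period are a deterministic function of $(y(1),\dots,y(T))$ that is identical on neighbouring datasets because they share the $\mathbf{w}$ components. Your post-processing phrasing (together with the explicit unfolding $y(t)=f(x_1(t),\dots,x_n(t))+v(t\modd T)$ and the linearity argument reducing the sensitivity to the $\mathbf{z}$ components) simply makes rigorous what the paper expresses as a conditional-density ratio equal to one for the deterministic tail.
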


\begin{proof} See Appendix~\ref{proof:tho:dp}.
\end{proof} 

Note that the scale of the additive DP noise in Theorem~\ref{tho:dp} is $T\Delta f_z/\epsilon$, which is \textit{constant and independent of the reporting horizon} $\tau$. This is because we are only interested in  protecting the private information hidden in the periodic components of the dataset. We can extend the notion of neighbouring datasets, and following that extend the definition of DP, to cover information leakage from the additional variations on top as well. This however comes a the cost of requiring larger DP noises. 

\begin{definition}[Strongly Neighbouring Datasets] \label{def:sdp} Two datasets $X(t),X'(t)\in\mathcal{X}^{n\times t}$ are neighbouring  in the strong sense if they only differ for at most one time series, denoted by $i$, and only one of the following statements holds:
\begin{enumerate}
\item $(x_i(t))_{t\in\mathbb{N}} \sim (\mathbf{z}_i,(\mathbf{w}_{i,k})_{k\in\mathbb{N}})$ and $(x'_i(t))_{t\in\mathbb{N}} \sim (\mathbf{z}'_i,(\mathbf{w}_{i,k})_{k\in\mathbb{N}})$;
\item $(x_i(t))_{t\in\mathbb{N}} \sim (\mathbf{z}_i,(\mathbf{w}_{i,k})_{k\in\mathbb{N}})$ and $(x'_i(t))_{t\in\mathbb{N}} \sim (\mathbf{z}_i,(\mathbf{w}'_{i,k})_{k\in\mathbb{N}})$ with at most one $k\in\mathbb{N}$ such that $\mathbf{w}_{i,k}\neq \mathbf{w}'_{i,k}$.
\end{enumerate}
\end{definition}

The first statement in Definition~\ref{def:sdp} is the same as the one in Definition~\ref{def:neighbourhood}. Therefore, any two neighbouring datasets in the sense of Definition~\ref{def:neighbourhood} are also neighbouring in the strong sense. The difference is that the second statement in Definition~\ref{def:sdp} also allows changes in the variations on top besides the periodic components. Now, we can define a stronger notion of DP.

\begin{definition}[Strong Differential Privacy (SDP)] A reporting mechanism by the curator is strongly $\epsilon$-differentially private (SDP) over the horizon $\{1,\dots,\tau\}\subseteq\mathbb{N}$ if~\eqref{eqn:diffpriv} holds for any two datasets $X(t),X'(t)$ that are neighbouring  in the strong sense. 
\end{definition}

In the next corollary, we show that, for linear queries and almost periodic datasets, we can ensure strong differential privacy with bounded privacy budget.

\begin{figure}[t!]
	\centering
	\begin{tikzpicture}
	\node[] at (0,0) {\includegraphics[width=.5\linewidth]{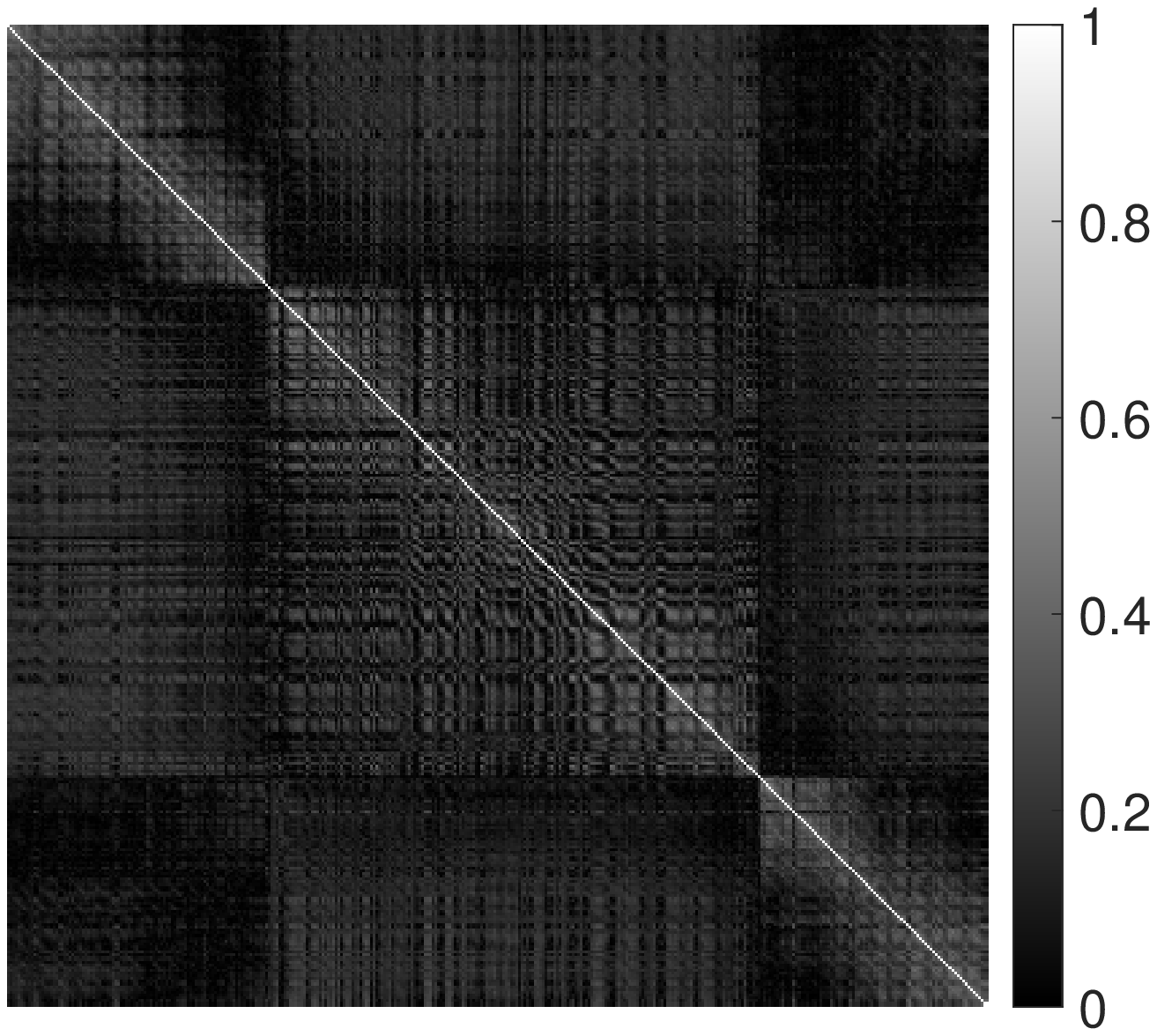}};
	\node[] at (0,-2.8) {$k=1,\dots,365$};
	\node[rotate=90] at (-3.4,0) {$\ell=1,\dots,365$};
	\end{tikzpicture}
	\vspace{-.3in}
	\caption{ Heatmap of the correlation coefficient $\rho_{k\ell}$ versus $k,\ell\in\{1,\dots,365\}$ for the Ausgrid data  with daily period.
		\label{fig:1a}
	}
	\centering
	\begin{tikzpicture}
	\node[] at (0,0) {\includegraphics[width=.5\linewidth]{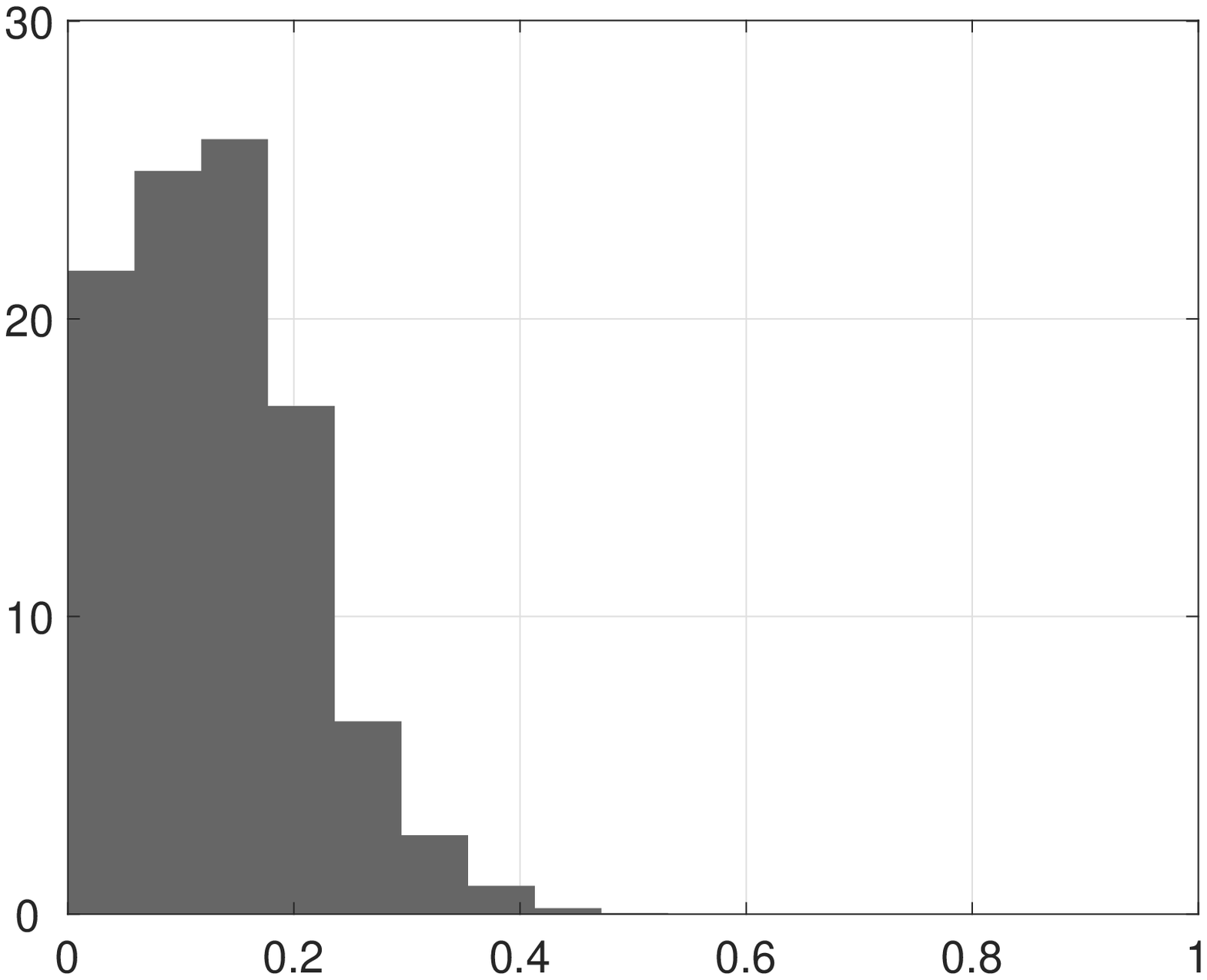}};
	\node[] at (0,-3.2) {magnitude of correlation};
	\node[rotate=90] at (-3.7,0) {Percent};
	\end{tikzpicture}
	\caption{Histogram of the cross-correlation coefficients $(\rho_{k\ell})_{k\neq \ell}$ for the Ausgrid dataset  with daily period. 
		\label{fig:2a}
	}
\end{figure}

\begin{figure}[t!]
	\centering
	\begin{tikzpicture}
	\node[] at (0,0) {\includegraphics[width=.5\linewidth]{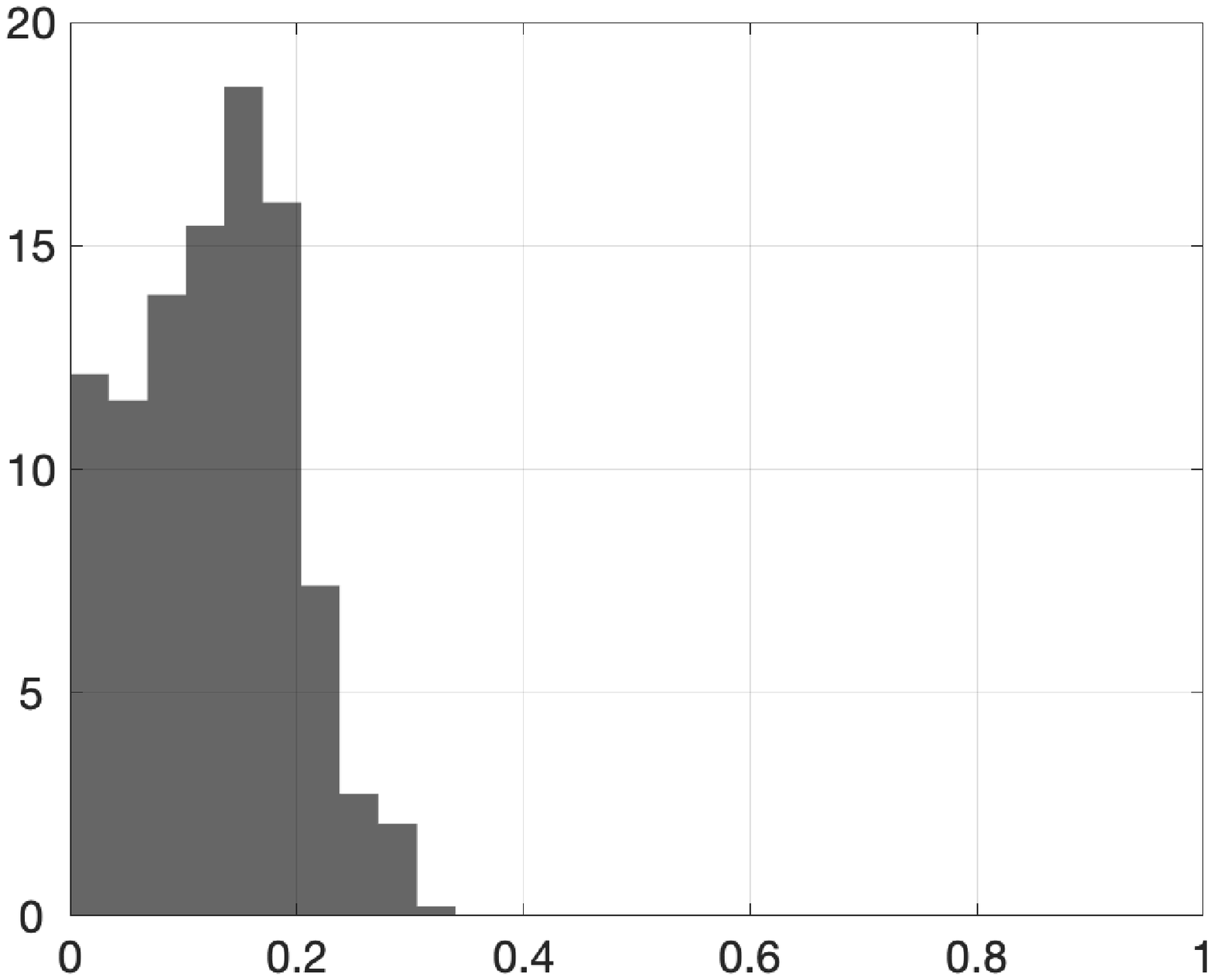}};
	\node[] at (0,-3.2) {magnitude of correlation};
	\node[rotate=90] at (-3.7,0) {Percent};
	\end{tikzpicture}
	\caption{Histogram of the cross-correlation coefficients $(\rho_{k\ell})_{k\neq \ell}$ for the Ausgrid data with weekly period.
		\label{fig:2aa}
	}
\end{figure}

\begin{figure}[t!]
	\centering
	\begin{tikzpicture}
	\node[] at (0,0) {\includegraphics[width=.5\linewidth]{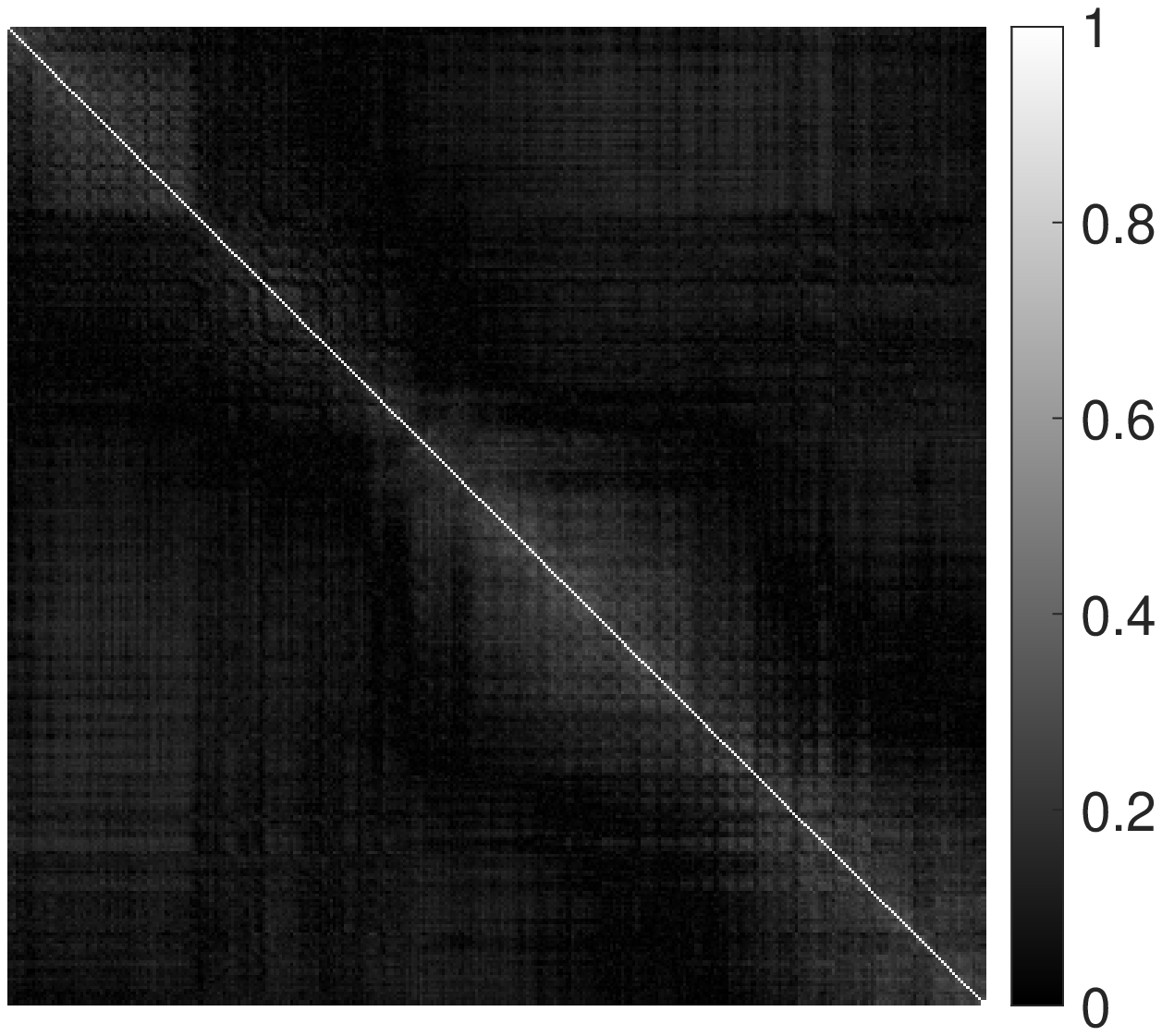}};
	\node[] at (0,-2.8) {$k=1,\dots,349$};
	\node[rotate=90] at (-3.4,0) {$\ell=1,\dots,349$};
	\end{tikzpicture}
	\vspace{-.3in}
	\caption{ 
		Heatmap of the correlation coefficient $\rho_{k\ell}$ versus $k,\ell\in\{1,\dots,349\}$ for the UMass data  with daily period.
		\label{fig:1b}
	}
	\centering
	\begin{tikzpicture}
	\node[] at (0,0) {\includegraphics[width=.5\linewidth]{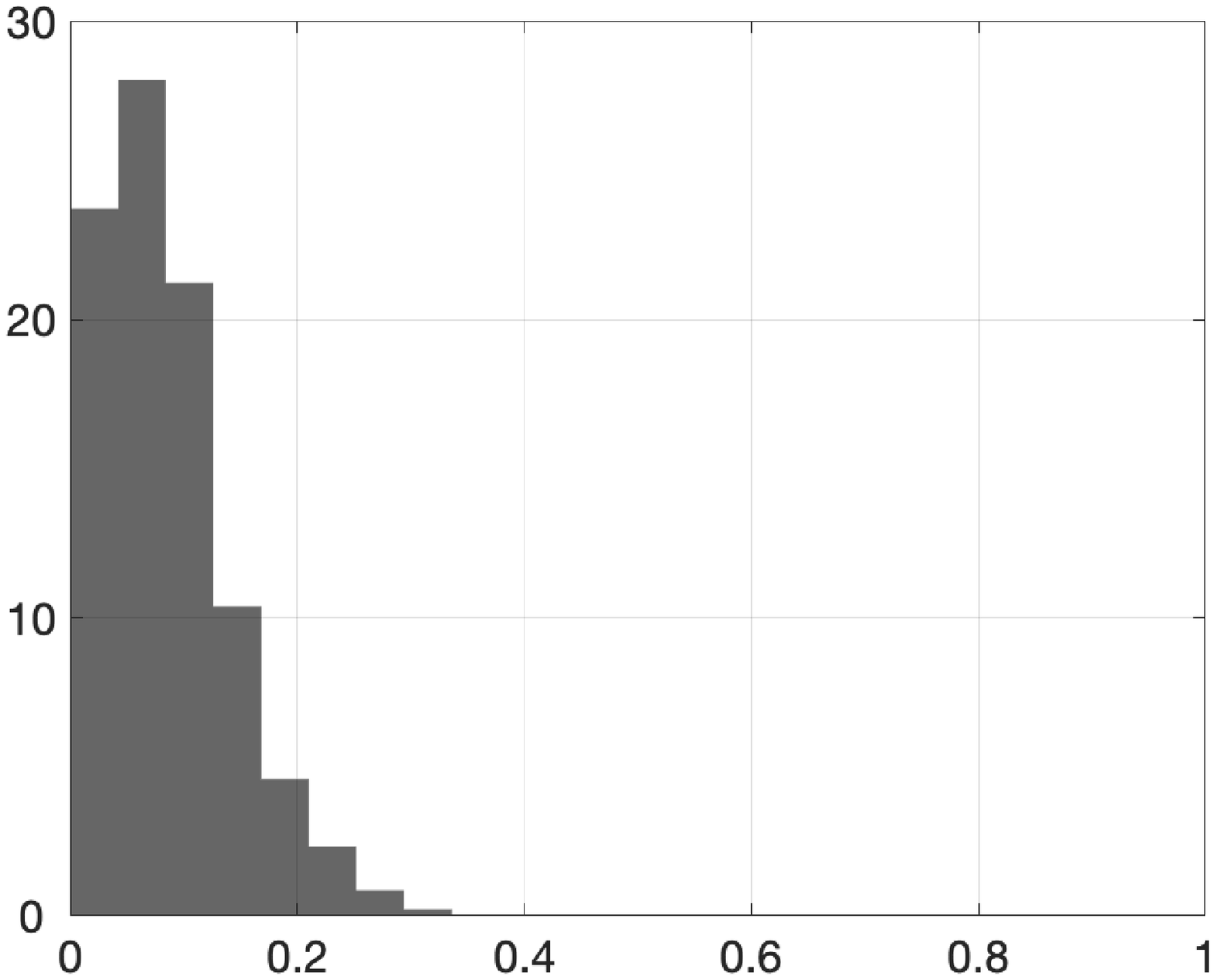}};
	\node[] at (0,-3.2) {magnitude of correlation};
	\node[rotate=90] at (-3.7,0) {Percent};
	\end{tikzpicture}
	\caption{Histogram of the cross-correlation coefficients $(\rho_{k\ell})_{k\neq \ell}$ for the UMass dataset  with daily period. 
		\label{fig:2b}
	}
\end{figure}

\begin{corollary}  \label{cor:sdp}
Let the response to linear query $f$ be given by
\begin{align}
y(t)=
\begin{cases}
f(x_1(t),\dots,x_n(t))+v(t), & 1\leq t\leq T,\\
y(t\modd T)-f(w_1(t\modd T),\dots,w_n(t\modd T))+f(w_1(t),\dots,w_n(t))+v(t), & t>T.
\end{cases}
\end{align}
This response is strongly $\epsilon$-differentially private over the horizon $\{1,\dots,\tau\}\subseteq\mathbb{N}$ for any $\tau\in\mathbb{N}$ if $v(t)$ is a zero-mean Laplace noise with scale 
\begin{align*}
b(t)=
\begin{cases}
T(\Delta f_z+\Delta f_w)/\epsilon, & 1\leq t\leq T,\\
T\Delta f_w/\epsilon, & \mbox{otherwise},
\end{cases}
\end{align*}
 where $\Delta f_z$ is define in~\eqref{eqn:Deltaf_z}
\begin{align*}
\Delta f_w:= \sup_{x_{-i}\in\mathcal{W}^{n-1}}\sup_{x_i,x'_i\in\mathcal{W}} |f(x_{-i},x_i)-f(x_{-i},x'_i)|.
\end{align*}
\end{corollary}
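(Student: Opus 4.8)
The plan is to reduce the statement to Theorem~\ref{tho:dp} wherever possible and to treat the genuinely new phenomenon---changes in the variation component $\mathbf{w}$---by a direct likelihood-ratio bound. First I would use linearity of $f$ to split every report into a periodic part and a variation part: writing $f_z(t):=f(z_1(t),\dots,z_n(t))$ and $f_w(t):=f(w_1(t),\dots,w_n(t))$, linearity gives $f(x_1(t),\dots,x_n(t))=f_z(t)+f_w(t)$, and periodicity of each $\mathbf{z}_i$ makes $f_z$ periodic with period $T$. Substituting into the recursion and unrolling it, the reported vector has the closed form $y(t)=f_z(t)+f_w(t)+v(t)$ for $1\le t\le T$ and $y(t)=f_z(t)+f_w(t)+v(t\modd T)+v(t)$ for $t>T$. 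Since the $v(t)$ are independent and the map from $(v(t))_{t\le\tau}$ to $Y(\tau)$ is affine with unit-triangular Jacobian (each $v(t)$ enters $y(t)$ with coefficient one), the joint density of $Y(\tau)$ given $X(\tau)$ factorizes into one Laplace factor per instant, which is exactly the object I would compare for the two datasets.

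For the first kind of strong neighbour (Definition~\ref{def:sdp}, item~1) the variation component is unchanged, so $f_w$ is identical under $X$ and $X'$ and only $f_z$ differs. This is precisely the weak-neighbour situation of Theorem~\ref{tho:dp}, and since the mechanism here is the one of Theorem~\ref{tho:dp} with additional independent noise, the same telescoping applies: in the likelihood ratio every factor indexed by $t>T$ cancels between $X$ and $X'$ (the $f_z$ contribution cancels inside each factor and $f_w$ is shared), leaving only the $T$ first-period factors. Their total contribution is bounded, via $|f_z(s)-f_z'(s)|\le\Delta f_z$ and the scale $b(s)=T(\Delta f_z+\Delta f_w)/\epsilon$, by $\sum_{s=1}^{T}\Delta f_z/b(s)=\epsilon\,\Delta f_z/(\Delta f_z+\Delta f_w)\le\epsilon$.

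For the second kind of strong neighbour (item~2) the periodic component is unchanged and exactly one variation block $\mathbf{w}_{i,k}$ differs, so $f_w(t)$ changes only for the $T$ instants $t$ lying in period $k$, where $|f_w(t)-f_w'(t)|\le\Delta f_w$. When $k\ge 2$ this block consists entirely of instants $t>T$; in the factorized density only the $T$ factors carried by those instants change, each argument by at most $\Delta f_w$, so the log-ratio is at most $\sum_{t\in\text{period }k}\Delta f_w/b(t)=T\cdot\Delta f_w/(T\Delta f_w/\epsilon)=\epsilon$. Combining this with the first case (and noting that a strong-neighbour pair satisfies exactly one of the two items) would give the claimed budget, since both bounds are at most $\epsilon$.

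The main obstacle is the remaining subcase of item~2 in which the differing block is the first period, $k=1$. There the variation change perturbs the anchors $y(1),\dots,y(T)$, and because these anchors are reused at every subsequent matching phase, a naive bound lets the perturbation reappear in every later factor and accumulate with $\tau$. The delicate step is therefore to show that the subtraction of $f_w(t\modd T)$ in the recursion confines the $\Delta f_w$ sensitivity of a first-period variation change to the $T$ first-period factors rather than propagating it; the enlarged first-period scale $b(s)=T(\Delta f_z+\Delta f_w)/\epsilon$ is chosen precisely so that these factors absorb the $\Delta f_w$ sensitivity on top of the $\Delta f_z$ sensitivity already needed for item~1. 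Making this confinement rigorous at the level of the joint density---rather than at the level of the marginal means of the reports, where the cancellation is transparent---is where I expect the real work to lie, and it is the step I would scrutinize most carefully.
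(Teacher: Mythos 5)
Your factorized-density setup, the closed form $y(t)=f_z(t)+f_w(t)+v(t\modd T)+v(t)$, and your bounds for item~1 and for item~2 with $k\ge 2$ are all correct, and they are a careful elaboration of what the paper intends (its proof of the corollary is a one-sentence deferral to the proof of Theorem~\ref{tho:dp}). But the subcase you flag and leave open, item~2 with $k=1$, is a genuine gap, and the confinement you hope to establish is in fact \emph{false} at the joint-density level, so no amount of care closes it for the mechanism and scales as stated. Write $s_t=t\modd T$, $f_w(t)=f(w_1(t),\dots,w_n(t))$, and $\delta_s=f(w'_1(s),\dots,w'_n(s))-f(w_1(s),\dots,w_n(s))$ for $s\le T$. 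The density factor carried by an instant $t>T$ is $p_{b(t)}\bigl(y(t)-y(s_t)-f_w(t)+f_w(s_t)\bigr)$, and it contains $f_w(s_t)$ explicitly; a change of $\mathbf{w}_{i,1}$ therefore shifts the argument of \emph{every} factor with $t>T$ by $-\delta_{s_t}$, in addition to shifting the $T$ first-period factors by $\delta_t$. For independent Laplace factors whose arguments are shifted by a vector $c$, the supremum over $Y$ of the log-likelihood ratio is exactly $\sum_t |c_t|/b(t)$ (the coordinates $u_t$ of your unit-triangular change of variables are in bijection with $Y$, so each factor can be driven into its matching tail independently). Hence the worst-case privacy loss in this subcase is $\sum_{t\le T}|\delta_t|/b(t)+\sum_{T<t\le\tau}|\delta_{s_t}|/b(t)$, which with $|\delta_s|=\Delta f_w$ equals $\epsilon\Delta f_w/(\Delta f_z+\Delta f_w)+(\tau-T)\epsilon/T$; this exceeds $\epsilon$ already at $\tau=2T$ and grows without bound in $\tau$. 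The cancellation you observed holds only for the marginal mean of $y(t)$: jointly, $y(t)-y(s_t)=f_w(t)-f_w(s_t)+v(t)$, so each later period re-measures $f_w(s_t)$ with fresh noise of the small scale $T\Delta f_w/\epsilon$, and averaging these differences across periods recovers the first-period variation block arbitrarily well.

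So your instinct about where the real work lies is exactly right, but the work cannot be completed: you have located a case that the paper's proof-by-reference to Theorem~\ref{tho:dp} silently fails to cover. The key step in that proof---that $(y(t))_{t>T}$ is a deterministic continuation identical under both neighbouring datasets---makes the $t>T$ factors cancel only when the variation components are shared (item~1, and by the same token item~2 with $k\ge 2$ leaves the first-period and off-block factors untouched); with fresh noise $v(t)$ and a differing $\mathbf{w}_{i,1}$, the conditional factors for $t>T$ do not cancel and accumulate. The statement would need a repair, e.g., excluding $k=1$ from Definition~\ref{def:sdp} or bounding $\tau$, or anchoring each period to the preceding one via $y(t)=y(t-T)-f_w(t-T)+f_w(t)+v(t)$ for $t>T$, in which case a change in any single block $\mathbf{w}_{i,k}$ touches at most the $2T$ factors of periods $k$ and $k+1$ and a bound of the stated form is recovered with constants inflated by roughly a factor of two.
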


\begin{proof} The proof follows the same line of reasoning as in the proof of Theorem~\ref{tho:dp} in Appendix~\ref{proof:tho:dp}.
\end{proof}


\section{Testing Almost Periodicity} \label{sec:testing}
Assume that we have access to several realizations of of a time series $(x(t))_{t\in\mathbb{N}}\sim (\mathbf{z},(\mathbf{w}_{k})_{k\in\mathbb{N}})$. The realizations are denoted by $(x^j(t))_{t\in\mathbb{N}} \sim (\mathbf{z}^j,(\mathbf{w}_{k}^j)_{k\in\mathbb{N}})$, $j\in\{1,\dots,J\}$, with $J$ denoting the number of the measurements. We can compute the (Pearson) correlation coefficient as
\begin{align*}
\rho_{k\ell}:=
\frac{1}{J-1}\sum_{j=1}^J \frac{1}{\sigma_k\sigma_\ell}(\mathbf{w}_k^{j}-\mu_k)^\top (\mathbf{w}_\ell^{j}-\mu_\ell),
\end{align*}
where
\begin{align*}
\mu_k:=&\frac{1}{J}\sum_{j=1}^J \mathbf{w}_k^{j},
\end{align*}
and
\begin{align*}
\sigma_k:=&\sqrt{\frac{1}{J-1}\sum_{j=1}^J (\mathbf{w}_k^{j}-\mu_k)^\top (\mathbf{w}_k^{j}-\mu_k)}.
\end{align*}
By construction $\rho_{kk}=1$. Furthermore, $0\leq \rho_{k\ell}\leq 1$. If $\rho_{k\ell}$ is small (e.g., $\rho_{k\ell}\ll 1$) for all $k,\ell\in\mathbb{N}$ such that $k\neq \ell$, we can deduce that $\mathbf{w}_k^{j}$ and $\mathbf{w}_\ell^{j}$ are uncorrelated. Although uncorrelated does not always imply independence, many statistical tests for independence, such as the Pearson's chi-squared independence test, rely on it~\cite{pearson1900x}. We analyse the value of the correlation coefficients for energy signals in the subsequent sections to establish their almost periodicity. We use the consumptions of different households in the dataset as realizations to be able to compute the correlation coefficients. 

\section{Experimental Results} \label{sec:experiment}

\subsection{Data}
We use smart meter measurements of of households, made available by the Ausgrid in~\cite{ausgrid} and the UMass in~\cite{umass} to illustrate the results of this paper. The individuals in both these datasets have been de-identified. We have received a waiver from CSIRO's ethics review boards. 

\textbf{Ausgrid Dataset}. The dataset in~\cite{ausgrid} contains electricity data for 300 homes with rooftop solar systems that are measured by a smart meter that, in addition to measuring the usage from the grid,  records the total amount of solar power generated. The measurements are obtained every 30 minutes over 2010-2013. In this paper, we use the data over July 2012 to June 2013.

\textbf{UMass Dataset}. The dataset in~\cite{umass} contains smart meter measurements of 114 single-family households for the period 2014-2016.  In this paper, we use the data in 2016, specifically over the period of Jan 1 to Dec 14,  as it contains energy-consumption measurements in regular one minute intervals. We transform these data to 30 minutes energy consumption readings. 

\subsection{Almost Periodicity}
We first start by investigating the validity of the assumption that the energy data are  almost periodic with period of one day, i.e., $T=48$ as we are dealing with smart meter readings per 30 minutes windows. 

At first, we use the average daily trace of the consumption across the year as the periodic component and the difference of the actual consumptions on any given day from the periodic component as the variations on top. In this case, the period becomes $T=48$ (as we have access to 30 minutes energy consumption readings). Note that parameters, such as temperature and solar availability, can significantly impact the energy consumption from day-to-day and thus the variations on top of the periodic component can be potentially large; however, these variations are caused by natural temperature or solar variations and do not contain private information about the households. These variations can also be statistically independent from one day to another. Note that statistical independence does not exclude time-varying probability density for the daily variations (thus, the variations can differ significantly in the statistical sense across the year).

We start with the Ausgrid dataset. Figure~\ref{fig:1a} illustrates the correlation coefficient $\rho_{k\ell}$ versus $k,\ell\in\{1,\dots,365\}$. We can see there is a high correlation if $k=\ell$, as expected. However, the cross-correlation coefficients $\rho_{k\ell}$, with $k\neq \ell$, are relatively small. Figure~\ref{fig:2a} shows the histogram of the cross-correlation coefficients $(\rho_{k\ell})_{k\neq \ell}$. All the cross-correlation coefficients are all smaller than $0.5$ with at least half the coefficients smaller than or equal to $0.1259$ (which is considerably smaller than one). This points to a strong evidence for almost periodicity of the energy readings in the Ausgrid dataset. 

If we are not convinced by these cross-correlation values, we can consider the average weekly trace of the consumption across the year as the periodic component and the difference of the actual consumptions on any given day from the periodic component as the variations on top. In this case, the period becomes $48\times 7=336$. Figure~\ref{fig:2aa} shows the histogram of the cross-correlation coefficients $(\rho_{k\ell})_{k\neq \ell}$. All the cross-correlation coefficients are all smaller than $0.33$, which is much smaller daily case.

Now, we may consider the UMass dataset. Figure~\ref{fig:1b} illustrates the correlation coefficient $\rho_{k\ell}$ versus $k,\ell\in\{1,\dots,349\}$. Similarly, there is a high correlation if $k=\ell$ and  the cross-correlation coefficients $(\rho_{k\ell})_{k\neq \ell}$ are small. Figure~\ref{fig:2b} shows the histogram of the cross-correlation coefficients $(\rho_{k\ell})_{k\neq \ell}$. All the cross-correlation coefficients are all smaller than $0.35$ with at least half the coefficients smaller than or equal to $0.0746$. These values are significantly smaller than the corresponding values for the Ausgrid dataset. Thus, there seems to be a very strong evidence for almost periodicity of the energy readings in the UMass dataset.

\begin{figure}[t!]
\centering
\begin{tikzpicture}
\node[] at (0,0) {\includegraphics[width=.5\linewidth]{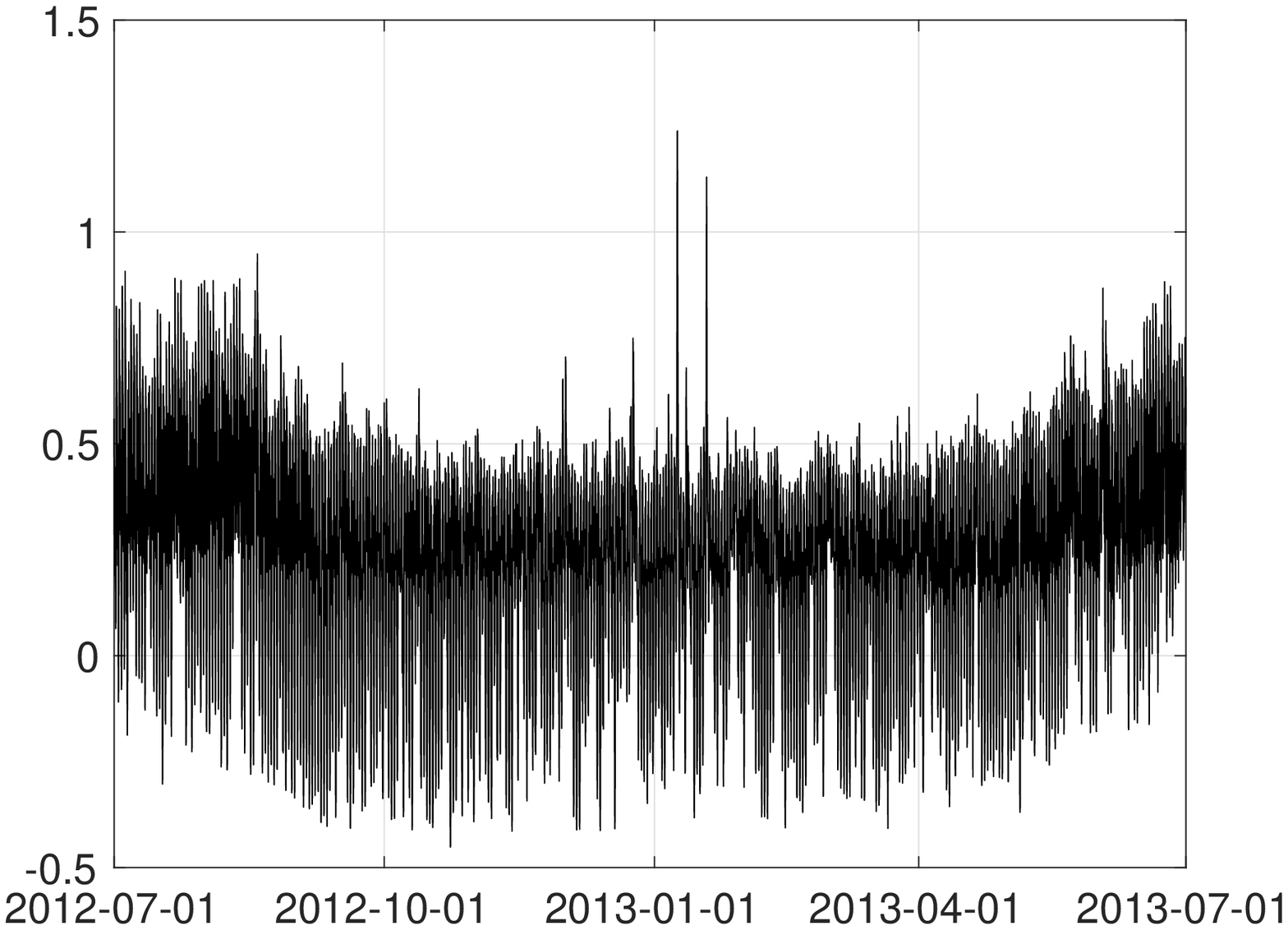}};
\node[] at (0,-3.2) {date $t$};
\node[rotate=90] at (-3.7,0) {$y(t)$};
\end{tikzpicture}
\caption{The outcome of reporting mechanism in Theorem~\ref{tho:dp} with $\epsilon=5$ in the case of the averaging query for the Ausgrid data.
\label{fig:1c}
}
\centering
\begin{tikzpicture}
\node[] at (0,0) {\includegraphics[width=.5\linewidth]{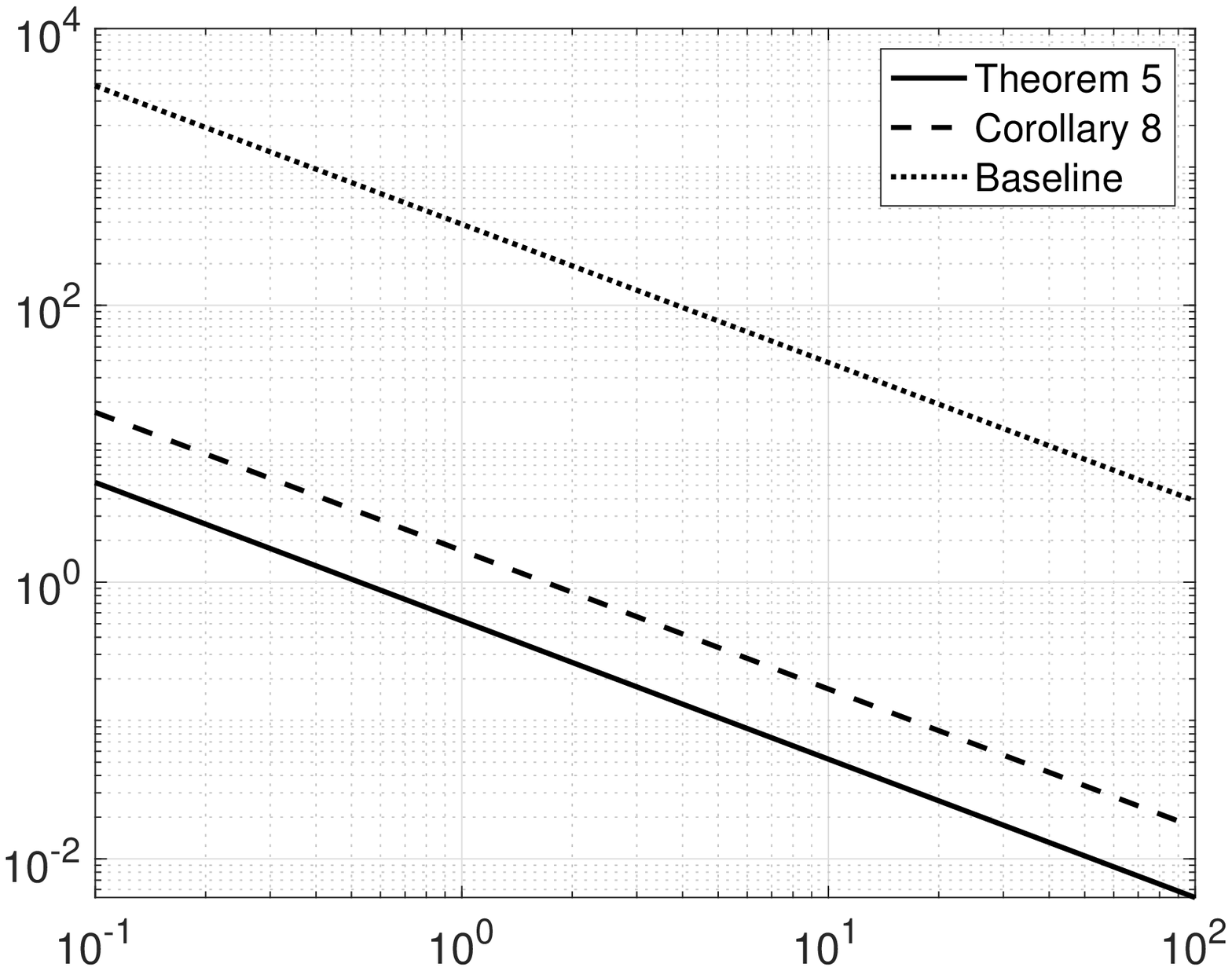}};
\node[] at (0,-3.2) {privacy budget $\epsilon$};
\node[rotate=90] at (-3.9,0) {relative error};
\end{tikzpicture}
\caption{The black line shows the relative error of this reporting mechanism  in Theorem~\ref{tho:dp} for the Ausgrid data. The dashed line shows the relative error for the case where the almost periodicity of the dataset is ignored and the privacy budget is divided equally across the entire horizon.
\label{fig:1d}
}
\end{figure}

\subsection{Utility}
Now, we consider the mechanisms in Theorem~\ref{tho:dp} and Corollary~\ref{cor:sdp} for reporting average consumption \linebreak $f(x_1,\dots,x_n)=(x_1+\cdots+x_n)/n$ of the households in the dataset per 30 minutes. 

We start with the Ausgrid dataset. Figure~\ref{fig:1c} shows the outcome of reporting mechanism in Theorem~\ref{tho:dp} with $\epsilon=5$ for the averaging query $f(x_1,\dots,x_n)$ over Jan 1 to Dec 14, 2016. Let us define the relative error as a measure of utility as
\begin{align*}
\mbox{relative error}:=\frac{\sqrt{\mathbb{E}\{|y(t)-f(x_1(t),\dots,x_n(t))|^2\}}}{\max_t |f(x_1(t),\dots,x_n(t))|}.
\end{align*}
Figure~\ref{fig:1d} illustrates the relative error versus privacy budget $\epsilon$ for the reporting mechanisms in Theorem~\ref{tho:dp} and  Corollary~\ref{cor:sdp} by, respectively, the solid and the dashed lines. As a baseline for comparison, the dotted line illustrates the case where the almost periodicity of the dataset is ignored and the privacy budget is divided equally across the entire horizon of length $16,752=48$ (measurements per day)$\times$ $349$ (days between Jan 1 to Dec 14, 2016). Clearly, the utility of the reports are much better by taking into account the almost-periodicity assumption. 

\begin{figure}[t!]
\centering
\begin{tikzpicture}
\node[] at (0,0) {\includegraphics[width=.5\linewidth]{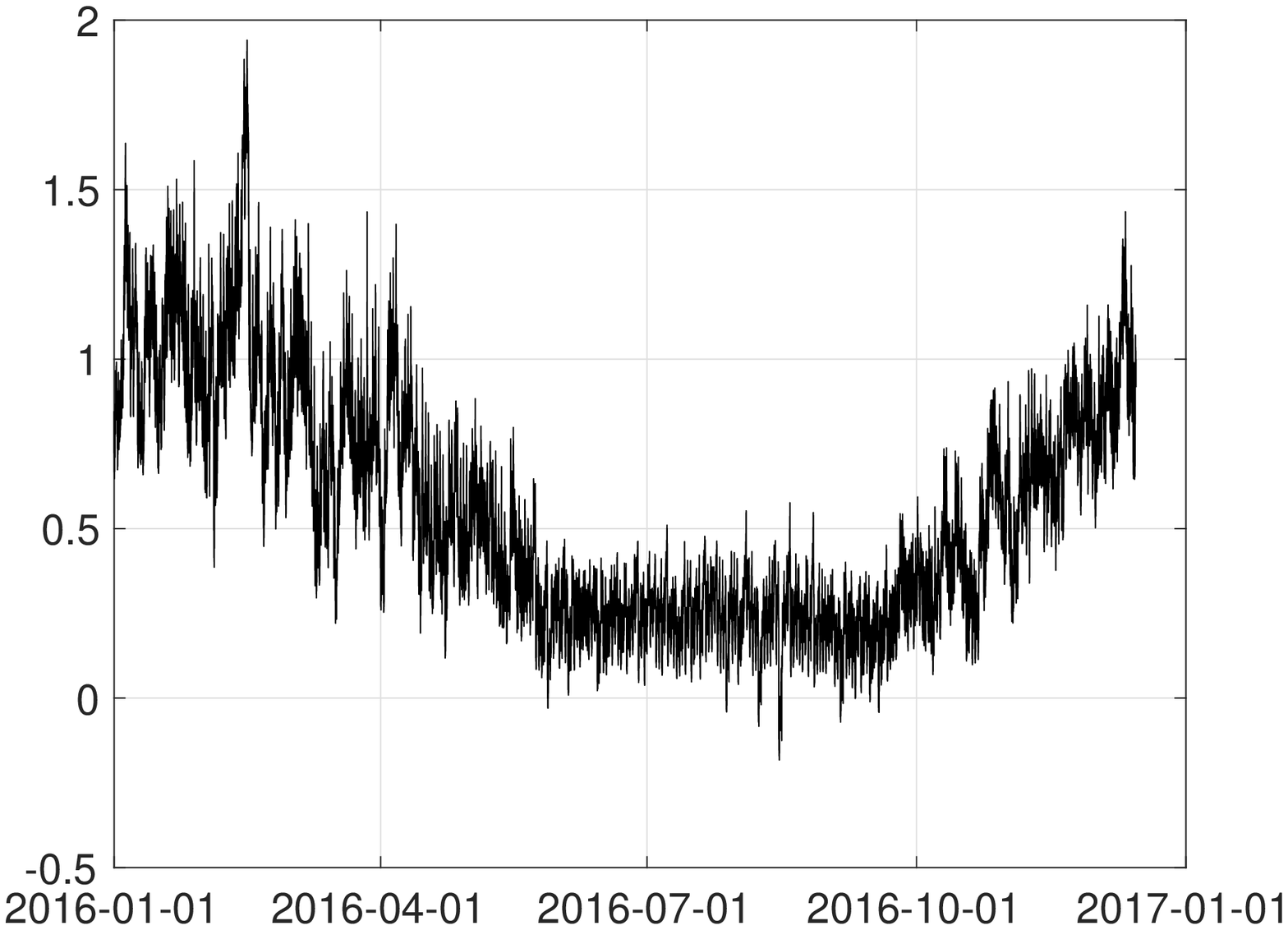}};
\node[] at (0,-3.2) {date $t$};
\node[rotate=90] at (-3.7,0) {$y(t)$};
\end{tikzpicture}
\caption{The outcome of reporting mechanism in Theorem~\ref{tho:dp} with $\epsilon=5$ in the case of the averaging query for the UMass data.
\label{fig:2c}
}
\centering
\begin{tikzpicture}
\node[] at (0,0) {\includegraphics[width=.5\linewidth]{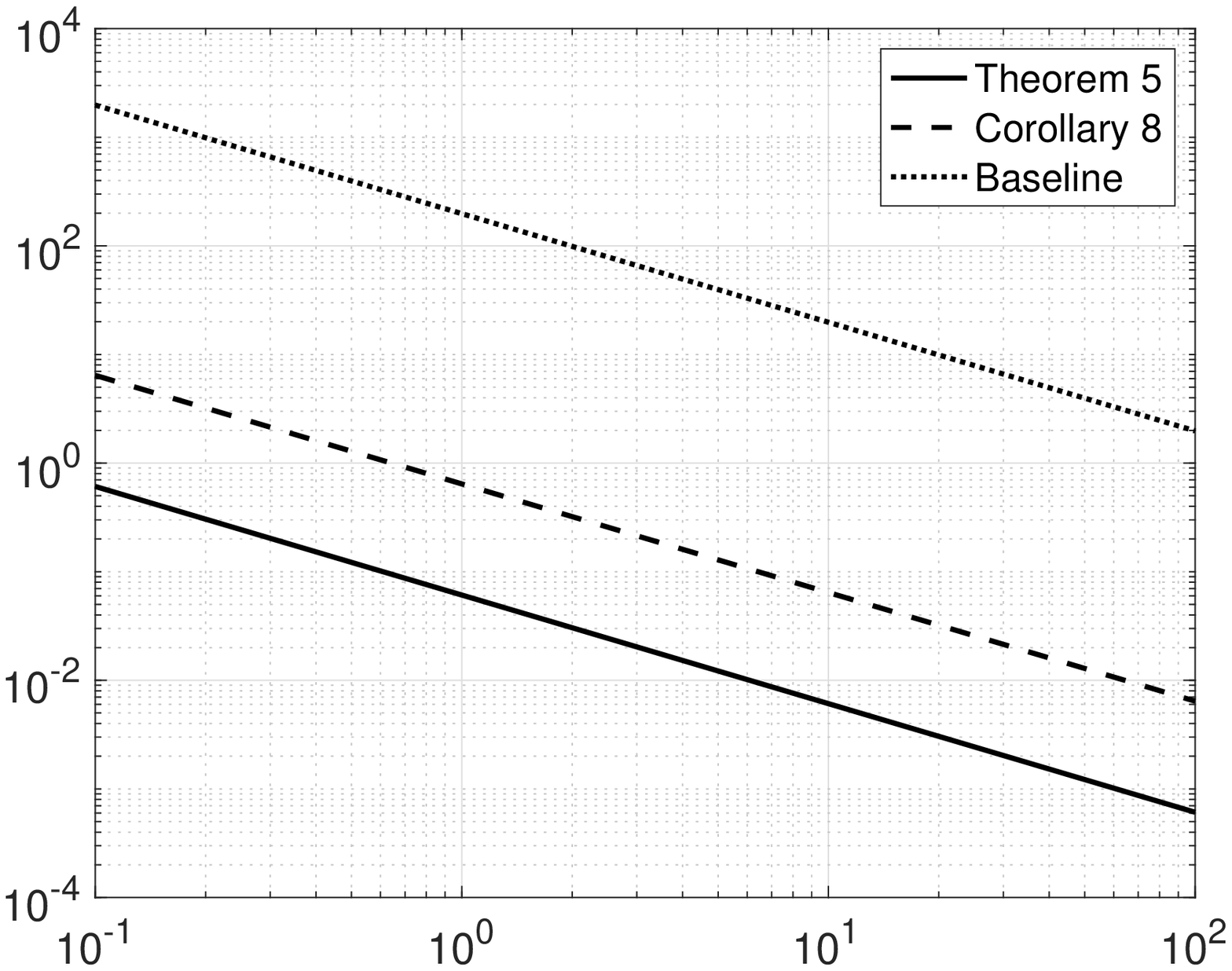}};
\node[] at (0,-3.2) {privacy budget $\epsilon$};
\node[rotate=90] at (-3.9,0) {relative error};
\end{tikzpicture}
\caption{The black line shows the relative error of this reporting mechanism  in Theorem~\ref{tho:dp} for the UMass data. The dashed line shows the relative error for the case where the almost periodicity of the dataset is ignored and the privacy budget is divided equally across the entire horizon.
\label{fig:2d}
}
\end{figure}

Now, we consider the UMass dataset. Figure~\ref{fig:2c} shows the outcome of reporting mechanism in Theorem~\ref{tho:dp} with $\epsilon=5$ for the averaging query. Similarly, the solid and dashed lines in Figure~\ref{fig:2d}  illustrates the relative error of the reporting mechanisms in Theorem~\ref{tho:dp} and Corollary~\ref{cor:sdp}, respectively. Furthermore, the dotted line shows the relative error for the case where the almost periodicity of the dataset is ignored and the privacy budget is divided equally across the entire horizon of length $17,520=48$ (measurements per day)$\times$ $365$ (days between Jan 1 to Dec 14, 2016). Again, as expected, the utility of the reports are much better by taking into account the almost periodicity condition. 

\section{Conclusions and Future Work} \label{sec:conclusions}
We considered the privacy of almost periodic-time datasets, such as energy data.
We defined DP for almost periodic datasets and developed a Laplace mechanism for responding to linear queries. We illustrated the utility of these privacy-preserving mechanisms on real energy data. Future work can focus on generalizing the results to larger family of queries, i.e., non-linear queries.

\bibliographystyle{plain}
\bibliography{citation}

\appendix

\section{Proof of Theorem~\ref{tho:dp}}
\label{proof:tho:dp}
Note that
\begin{align*}
\frac{p((y(t))_{t=1}^T|X(t))}{p((y(t))_{t=1}^T|X'(t))}&=\prod_{t=1}^T
\frac{\exp(-|y(t)-f(x_1(t),\dots,x_n(t)))|/b)}{\exp(-|y(t)-f(x'_1(t),\dots,x'_n(t)))|/b)}\\
&=
\prod_{t=1}^T\exp(-|y(t)-f(x_1(t),\dots,x_n(t)))|/b+|y(t)-f(x'_1(t),\dots,x'_n(t)))|/b),
\end{align*}
where
$b$ denotes the scale of the Laplace noise. As a result, we get
\begin{align*}
\frac{p((y(t))_{t=1}^T|X(t))}{p((y(t))_{t=1}^T|X'(t))}
& \leq \prod_{t=1}^T\exp(|f(x_1(t),\dots,x_n(t))-f(x'_1(t),\dots,x'_n(t)))|/b)\\
&\leq \prod_{t=1}^T\exp(\Delta f_z/b)\\
&=\exp(T \Delta f_z/b).
\end{align*}
We get
\begin{align*}
\frac{p((y(t))_{t\in\mathbb{N}}|X(t))}{p((y(t))_{t\in\mathbb{N}}|X'(t))}
&=\frac{p((y(t))_{t\in\mathbb{N}\setminus\{1,\dots,T\}}|X(t),(y(t))_{t=1}^T)}{p((y(t))_{t\in\mathbb{N}\setminus\{1,\dots,T\}}|X'(t),(y(t))_{t=1}^T)} \frac{p((y(t))_{t=1}^T|X(t))}{p((y(t))_{t=1}^T|X'(t))}\\
&=\frac{p((y(t))_{t=1}^T|X(t))}{p((y(t))_{t=1}^T|X'(t))}\\
&\leq \exp(T \Delta f_z/b),
\end{align*}
where the first inequality follows from that $(y(t))_{t\in\mathbb{N}\setminus\{1,\dots,T\}}$ is deterministically determined by  
\begin{align*}
y(t)=y(t\modd T)&-f(w_1(t\modd T),\dots,w_n(t\modd T))+f(w_1(t),\dots,w_n(t)).
\end{align*}
Selecting $b=T\Delta f_z/\epsilon$ ensures differential privacy with budget $\epsilon$.

\end{document}